\newtheorem{theorem}{Theorem}
\newtheorem{lemma}[theorem]{Lemma}
\begin{document}

\title{Computation with Large Advice}

\author{Hiroki Morizumi\\
{\small Shimane University, Japan}\\
{\small morizumi@cis.shimane-u.ac.jp}
}

\date{}

\maketitle

\begin{abstract}
In this paper, we consider a new direction of computation, which we
call computation with large advice.
We mainly consider constant space computation with large advice
in Turing machines, and prove the following facts:
(i) The class of decision problems solvable by a constant
space Turing machine with polynomial-size advice includes
nonuniform-{\sf NC}$^1$,
(ii) The class of decision problems solvable by a constant
space Turing machine with quasipolynomial-size advice equals
nonuniform-{\sf polyL}.
The facts mean constant space computation
with large advice has unexpected computational power.
On the other hand, we mention bounded time computation with large advice,
and attempt to propose a concept of ``algorithms with large advice''.
In the proposal, advice is precomputed data for a problem and a fixed
instance size, and we expect efficient algorithms by large or huge advice.
\end{abstract}

\section{Introduction}

In this paper, we prove the following theorems.

\begin{theorem} \label{thrm:poly}
The class of decision problems solvable by a constant
space Turing machine with polynomial-size (sequential access) advice
includes nonuniform-{\sf NC}$^1$.
\end{theorem}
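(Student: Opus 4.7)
The plan is to exploit Barrington's theorem, which says that every language in nonuniform-$\mathsf{NC}^1$ is recognized by a family of polynomial-length width-$5$ permutation branching programs. Concretely, for each input length $n$ there is a sequence $(i_j,\pi_{j,0},\pi_{j,1})_{j=1}^{L_n}$ with $L_n=\mathrm{poly}(n)$, $i_j\in\{1,\ldots,n\}$, and $\pi_{j,0},\pi_{j,1}\in S_5$, together with a fixed accept set $A\subseteq S_5$, such that an input $x$ is accepted iff $\prod_{j=1}^{L_n}\pi_{j,\,x_{i_j}}\in A$. Since $|S_5|=120$ is constant, the running product can live entirely inside the finite control of a Turing machine; all that remains is to feed the machine the $L_n$ instructions in order, together with the relevant input bits, using head positions rather than work-tape cells.

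I would encode the branching program on the advice tape as a sequence of $L_n$ blocks separated by a delimiter symbol. Block $j$ consists of a constant-size header storing $\pi_{j,0}$ and $\pi_{j,1}$, followed by a unary-encoded representation of $i_j$ (that is, $i_j$ copies of a chosen symbol) terminated by another delimiter. Each block has size $O(n)$ and there are $\mathrm{poly}(n)$ of them, so the advice has polynomial total length.

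The constant-space TM keeps in its finite control a single variable $\sigma\in S_5$, initialized to the identity, and processes each block $j$ as follows: (i) parse the header and load $\pi_{j,0},\pi_{j,1}$ into the state; (ii) advance the input head one cell for every symbol in the unary segment of the block, so that the input head ends up over cell $i_j$; (iii) read the bit $x_{i_j}$ and update $\sigma\leftarrow\sigma\cdot\pi_{j,\,x_{i_j}}$; (iv) rewind the input head to the left endmarker and move the advice head past the block delimiter to the start of block $j+1$. After all $L_n$ blocks have been processed, the machine accepts iff $\sigma\in A$. Only a constant number of work-tape cells are ever touched (in fact none are needed), and the input and advice heads carry every piece of positional information.

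The main obstacle I expect is purely bookkeeping: verifying that no step of the simulation secretly stores a $\log n$-bit counter on the work tape. Every task --- stepping through blocks on the advice tape, positioning the input head at cell $i_j$, resetting between steps --- must be reducible to head motion together with finite control. The unary encoding of $i_j$, an input endmarker, and block delimiters on the advice tape together make this possible, but writing the simulation out carefully, and confirming that the nonuniformity of the branching-program family poses no issue (it is simply hard-wired into the advice for each length $n$), is where the details lie.
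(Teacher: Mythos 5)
Your proposal is correct and follows essentially the same route as the paper: invoke Barrington's theorem and then simulate the width-5 branching program by a constant-space machine that reads a positional encoding of each queried variable off the advice tape, using synchronized head motion (your unary blocks versus the paper's length-$n$ marked segments) in place of a logarithmic counter. The extra bookkeeping you spell out is exactly the content of the paper's Lemma~\ref{lem:simu-p}.
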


\begin{theorem} \label{thrm:qpoly}
The class of decision problems solvable by a constant
space Turing machine with quasipolynomial-size (sequential access) advice
equals nonuniform-{\sf polyL}.
\end{theorem}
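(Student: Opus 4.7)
The plan is to prove both inclusions via the folklore equivalence between nonuniform-polyL and (nonuniform) quasipolynomial-size branching programs: the easy direction of that equivalence comes from unrolling the configuration graph of a polylog-space machine, and the other direction from simulating a BP given as advice in deterministic log(size) space. With this bridge, the theorem reduces to showing that constant-space TMs with quasipolynomial advice capture exactly the quasipolynomial-size branching programs.

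For the easy inclusion of the constant-space class into nonuniform-polyL, I convert a constant-space TM $M$ with quasipolynomial advice $\alpha(n)$ into a branching program. A configuration of $M$ is specified by the finite control state, the constant work-tape content, the input-head position in $[n]$, and the advice-head position in $[|\alpha(n)|]$, giving $O(n\cdot|\alpha(n)|)=2^{\mathrm{polylog}(n)}$ configurations in total. I use one BP node per configuration; the node queries the input bit under the current input head (the advice symbol is hard-wired in, since $\alpha(n)$ is fixed) and branches to $M$'s deterministic successor. The resulting BP is of quasipolynomial size and decides the same language.

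The main direction is to simulate a family $\{B_n\}$ of quasipolynomial-size BPs by a constant-space TM $M$ with quasipolynomial advice. The plan is to encode $B_n$ on the advice tape as a sequence of equal-length records, one per node, where the record for node $v$ contains a unary marker of length $n$ locating the queried variable $i_v$ and unary displacements of length $|B_n|$ to the two successors; record boundaries are denoted by special symbols. The total length is $|B_n|\cdot O(|B_n|+n)=2^{\mathrm{polylog}(n)}$. The TM maintains the invariant that its advice head sits at the start of the current node's record. Each step of the simulation proceeds by (i) resetting the input head with the left-end marker and walking the unary variable marker in lockstep with the input head to land on $x_{i_v}$, (ii) reading the input bit, and (iii) consuming the appropriate unary displacement one tick at a time, each tick firing a subroutine that advances the advice head to the next record-boundary marker.

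The main technical obstacle will be step (iii), because the unary displacement is stored inside the source record while the advice head must leave that record to execute the jump. I plan to address this by an interleaved encoding: rather than concentrating the displacement at the source, I will distribute the displacement ticks along the navigation path, so every intermediate record locally carries the next tick, readable by the constant-state controller as it advances. With this layout the controller walks from source to successor without ever having to remember a position, and the total advice size remains $2^{\mathrm{polylog}(n)}$. Once the encoding is pinned down, checking that every primitive used by the controller — reset input head, walk a unary block, skip to the next record-boundary marker — is realizable in constant state is a routine verification.
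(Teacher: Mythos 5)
Your reduction to the equivalence between nonuniform-{\sf polyL} and quasipolynomial-size branching programs is the same bridge the paper uses, and your easy direction (unrolling the configuration graph of the constant-space machine, with the advice hard-wired) is fine. The gap is in the main direction, and it is exactly at the point you flag as the ``main technical obstacle.'' The interleaved encoding does not repair it. When the controller is in transit between the source record and the target record, it must decide at each intermediate record whether to stop or keep going; but a given intermediate record lies on the navigation paths of many different jumps $(v,b)$ with different destinations, so the record cannot ``locally carry the next tick'' unless the controller can tell it which jump is being executed --- and that identity (equivalently, the remaining displacement) requires $\Theta(\log |B_n|)=\mathrm{polylog}(n)$ bits, not constant state. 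The same objection applies to any layout in which the advice head must traverse a quasipolynomial number of records belonging to other nodes: a constant-state head scanning a one-dimensional tape cannot select ``its'' lane or count down ``its'' displacement. At bottom, simulating an arbitrary quasipolynomial-size BP forces the machine to remember which node it currently occupies, which is $\mathrm{polylog}(n)$ bits of state; no encoding trick on a read-only sequential advice tape removes that requirement.

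The missing idea is the reduction to \emph{constant-width leveled} branching programs, i.e., Barrington's theorem. The paper first converts the quasipolynomial-size BP to a circuit of polylogarithmic depth (Lemma~\ref{lem:conv}), then applies Lemma~\ref{lem:bar} to obtain a width-$5$ leveled BP of length $4^{\mathrm{polylog}(n)}$, which is still quasipolynomial. Only then is the constant-space simulation routine: the advice lists the levels in order, the machine scans the advice tape forward level by level, and the only thing it must remember between levels is which of the five nodes of the current level it occupies --- a constant. Your record-and-displacement encoding of the variable queries (the unary marker of length $n$ walked in lockstep with the input head) is essentially the paper's encoding and survives intact; what must change is that you apply it to the width-$5$ BP produced by Barrington's construction rather than to the original BP, so that all jumps go to the immediately following block of five records and no long-distance navigation is ever needed.
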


\noindent Two theorems imply unexpected power of large advice
in computation.
We consider ``computation with large advice'' in this paper.

\subsection{Advice}

Advice of Turing machines is an extra input which depends only
on the size of the main input.
Advice is given on a read-only tape of sequential access in this paper.
Advice is closely related to the nonuniformity of complexity classes.
For example, the nonuniform variant of {\sf P} is {\sf P/poly} which is
the class of decision problems solvable by a polynomial time Turing machine
with polynomial size advice.
The nonuniform variant of {\sf L} is {\sf L/poly} which is
the class of decision problems solvable by a logarithmic space Turing machine
with polynomial size advice.

In {\sf P/poly} and {\sf L/poly}, advice gives no computational power
beyond nonuniformity.
In this paper, we consider large advice which gives extra power
beyond nonuniformity.

\subsection{Implications of Theorem~\ref{thrm:qpoly}}

Nonuniform-{\sf polyL} ($=$ {\sf polyL/quasipoly}) includes various
complexity classes, e.g. as follows.
\begin{itemize}
\item {\sf L} with quasipolynomial-size (sequential access) advice
\footnote{If advice is large advice, we do not use notations such as
{\sf L/quasipoly} and {\sf NL/quasipoly} in this paper.
For large advice, known facts for advice is not necessarily valid and
further studies are needed.}
\item {\sf NL} with quasipolynomial-size (sequential access) advice
\end{itemize}
\noindent The classes above includes the class of decision problems solvable
by a constant space Turing machine with quasipolynomial-size
(sequential access) advice.
Furthermore, nonuniform-{\sf polyL} equals
\begin{itemize}
\item nonuniform-{\sf NC} of quasipolynomial size (i.e., the class of
decision problems solvable by circuits with quasipolynomial size,
depth $\log^{O(1)} n$, and fan-in 2)
\end{itemize}
Thus, by Theorem~\ref{thrm:qpoly}, all complexity classes above
are equivalent.

\subsection{The Contents of The Paper}

The main results of this paper are Theorem~\ref{thrm:poly} and
Theorem~\ref{thrm:qpoly}.
In Section~\ref{sec:idea}, we describe our ideas and contribution.
In Section~\ref{sec:poly} and Section~\ref{sec:qpoly}, we prove
Theorem~\ref{thrm:poly} and Theorem~\ref{thrm:qpoly}, respectively.

Theorem~\ref{thrm:poly} and Theorem~\ref{thrm:qpoly} raise questions and
interests for the power of large advice in computation.
In Section~\ref{sec:conc}, we discuss further studies of large advice.
Especially, we mention the power of large advice for algorithms.
Although the contents of Section~\ref{sec:conc} are conceptual,
the discussion strengthens the motivation of this paper.

\medskip

\noindent {\bf Note.}
In Theorem~\ref{thrm:poly} and Theorem~\ref{thrm:qpoly}, we assume that
advice is sequential access, and the theorems strongly depend on
sequential access.
On the other hand, we also discuss random access advice for
future works in Section~\ref{sec:conc}.

\section{Preliminaries}

\subsection{Definitions}

\subsubsection{Branching programs and Boolean circuits}

A {\em branching program} is a directed acyclic graph.
The nodes of out-degree 2 are called {\em inner nodes} and labeled
by a variable.
The nodes of out-degree 0 are called {\em sinks} and labeled by 0 or 1.
For each inner node, one of the outgoing edges is labeled by 0 and
the other one is labeled by 1.
There is a single specific node called the {\em start node}.
An assignment to the variables determines a computation path from
the start node to a sink node.
The value of the sink node is the output of the branching program.
If the nodes are arranged into a sequence of levels with edges going only
from one level to the next, then the {\em width} is the size of
the largest level.

{\em Circuits} are formally defined as directed acyclic graphs.
The nodes of in-degree 0 are called {\em inputs}, and each one of them
is labeled by a variable or by a constant 0 or 1.
The other nodes are called {\em gates}, and each one of them
is labeled by a Boolean function.
The {\em fan-in} of a node is the in-degree of the node, and
the {\em fan-out} of a node is the out-degree of the node.
In this paper, the gates are AND gates of fan-in two, OR gates of fan-in two,
and NOT gates.
There is a single specific node called {\em output}.
The {\em size} of a circuit is the number of gates in the circuit.
The {\em depth} of a circuit is the length of the longest path in the circuit.

\subsubsection{Complexity classes}

{\sf L} is the class of decision problems solvable by a logarithmic
space Turing machine.
{\sf NL} is the nondeterministic variant of {\sf L}.
{\sf polyL} is the class of decision problems solvable by a polylogarithmic
space Turing machine.
{\sf polyL/quasipoly} is the class of decision problems solvable by
a polylogarithmic space Turing machine with quasipolynomial size advice.

Let $n$ be the number of inputs in circuits.
{\sf NC}$^1$ is the class of decision problems
solvable by a uniform family of circuits with polynomial size,
depth $O(\log n)$, and fan-in 2.
{\sf NC} is the class of decision problems
solvable by a uniform family of circuits with polynomial size,
depth $\log^{O(1)} n$, and fan-in 2.

\subsubsection{The sorting problem}

The sorting problem is defined as follows in this paper.

\medskip

\noindent {\bf Input:} $n$ numbers in the set $\{0, 1, \ldots, k\}$.

\smallskip

\noindent {\bf Output:} A permutation $\langle a_1, a_2, \ldots, a_n \rangle$
of the input such that $a_1 \leq a_2 \leq \cdots \leq a_n$.

\medskip

\noindent Note that $n$ numbers are restricted in
the set $\{0, 1, \ldots, k\}$.

\subsection{A Conversion Lemma}

We use the following lemma in the proof of Theorem~\ref{thrm:qpoly}
(in Section~\ref{sec:qpoly}).

\begin{lemma} \label{lem:conv}
Any quasipolynomial-size branching program can be converted to
a circuit of polylogarithmic depth.
\end{lemma}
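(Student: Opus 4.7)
The plan is to represent the computation of a size-$s$ branching program $B$ on input $x$ as a reachability problem in a DAG and then evaluate that reachability by iterated Boolean matrix squaring. Since a branching program is deterministic on each fixed input, $B(x)=1$ iff the start node reaches the $1$-sink along the edges consistent with $x$. I would define the $s \times s$ Boolean matrix $M(x)$ by $M(x)_{u,v}=1$ iff $u$ is an inner node labeled by some variable $x_i$ and $v$ is the child of $u$ reached by the $x_i$-labeled edge. Every entry of $M(x)$ depends on at most one input bit, so each is computable by a constant-depth, constant-size subcircuit.

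Next I would compute the reachability matrix $R=(I \vee M(x))^{2^{\lceil\log s\rceil}}$ by repeated squaring: $M_0 = I \vee M(x)$ and $M_{k+1} = M_k \cdot M_k$. Only $\lceil \log s \rceil$ squarings are needed, and each squaring of two $s \times s$ Boolean matrices can be implemented by fan-in-$2$ AND/OR gates in depth $1 + \lceil \log s \rceil$ (one layer of ANDs, followed by a balanced binary OR-tree over the $s$ terms contributing to each entry). The overall depth is therefore $O(\log^2 s)$. The circuit outputs the $(\text{start},\,1\text{-sink})$ entry of the final matrix. When $s = 2^{\mathrm{polylog}(n)}$, the depth is $(\mathrm{polylog}(n))^2 = \mathrm{polylog}(n)$, as required.

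The main point to verify carefully is that the reduction works for arbitrary branching programs as defined earlier in the paper, not just leveled or width-bounded ones. This is fine because a branching program is already a DAG, and in a DAG on $s$ nodes any reachable vertex is reached in at most $s-1$ edge traversals, so $2^{\lceil\log s\rceil} \ge s$ squarings certainly suffice. A minor technical point is the handling of sinks, which have out-degree $0$; taking $I \vee M(x)$ as the base of the squaring adds self-loops so that once a computation path reaches a sink it stays there, and determinism of the branching program ensures the start node reaches exactly one sink, ruling out spurious accepting paths. The resulting circuit has size polynomial in $s$, hence quasipolynomial in $n$, but this is irrelevant since the lemma only constrains depth; indeed the subsequent application of Lemma~\ref{lem:bar} only uses the depth bound to obtain a quasipolynomial-length $5$-width branching program.
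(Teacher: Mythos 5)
Your proof is correct and is precisely the ``standard conversion to a parallel execution'' that the paper invokes in a single line: evaluate the branching program as a reachability question via iterated Boolean matrix squaring, giving depth $O(\log^2 s)=\mathrm{polylog}(n)$ when $s=2^{\mathrm{polylog}(n)}$. The only cosmetic point is that the paper's definition permits several sinks labeled $1$, so the output gate should be an OR over the entries $R[\mathrm{start},v]$ for all $1$-labeled sinks $v$, which costs only another $O(\log s)$ of depth.
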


\noindent Lemma~\ref{lem:conv} is proved by the following lemma.

\begin{lemma} \label{lem:conv2}
If any branching programs of size $s$ can be converted
to a circuit of depth $d$,
then any branching programs of size $2s$ can be converted
to a circuit of depth $d + \lceil \log s^2 \rceil + 2$.
\end{lemma}
\begin{proof}
Let $G$ be a branching programs of size $2s$.
Let $G_1$ and $G_2$ be the former $s$ nodes and the latter $s$ nodes,
respectively, in arbitrary topological sorted order.
Let $E_1$ be the edges between $G_1$ and $G_2$.
The number of edges in $E_1$ is at most $s^2$.
All paths from the start node to a sink node contain one edge in $E_1$.
For each edge in $E_1$, we check the existence of a path from the start
node to the 1 sink node.
Natural construction of such circuit is enough to prove the lemma.
\end{proof}

\begin{proof}[Proof of Lemma~\ref{lem:conv}]
We apply Lemma~\ref{lem:conv2} recursively.
\end{proof}

\section{The Key Ideas and Our Contribution} \label{sec:idea}

Our results of this paper are closely related to Barrington's theorem
which has been well-known in computational complexity theory.
In the 1980's, Barrington proved the following theorem.

\begin{theorem}[\cite{B89}] \label{thrm:bar}
The class of decision problems solvable by a nonuniform family of
polynomial-size $5$-width branching programs equals
nonuniform-{\sf NC}$^1$.
\end{theorem}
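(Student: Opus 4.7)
My plan is to mirror Barrington's original argument and establish both inclusions. For the easy direction, width-$5$ branching programs $\subseteq$ nonuniform-\textsf{NC}$^1$, I would view each level of a program as a pair of functions $\{1,\ldots,5\}\to\{1,\ldots,5\}$ selected by an input bit. Such a function is encoded in $O(1)$ bits, composition of two is a constant-size Boolean operation, and composition is associative, so a balanced binary tree of depth $O(\log s(n))$ combines the $s(n)$ levels. Reading the output off the resulting function at the start node yields a fan-in $2$ circuit of depth $O(\log n)$ and polynomial size.

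For the nontrivial direction I would pass through \emph{permutation branching programs over $S_5$}: sequences of instructions $(i_j,\sigma_j,\tau_j)$ with $\sigma_j,\tau_j\in S_5$, whose output on input $x$ is the product $\rho_\ell\cdots\rho_1$, where $\rho_j=\sigma_j$ if $x_{i_j}=0$ and $\rho_j=\tau_j$ otherwise. Say such a program \emph{$\alpha$-computes} a Boolean function $f$ if this product equals a fixed 5-cycle $\alpha$ when $f(x)=1$ and the identity when $f(x)=0$. A length-$\ell$ program of this form is readily simulated by a width-$5$ branching program of size $O(\ell)$. The crucial group-theoretic fact I would rely on is that every 5-cycle in $S_5$ is the commutator of two 5-cycles: once one explicit pair $(\beta_0,\gamma_0)$ of 5-cycles with $\beta_0\gamma_0\beta_0^{-1}\gamma_0^{-1}$ a 5-cycle is exhibited, the fact that the 5-cycles form a single conjugacy class in $S_5$ transports the identity to any target 5-cycle by conjugation.

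The heart of the proof is an induction on the depth $d$ of a fan-in $2$ \textsf{NC}$^1$ circuit: for every gate $g$ of depth $d$ and every 5-cycle $\alpha$, there is a permutation branching program of length at most $4^d$ that $\alpha$-computes $g$. I would first push all negations to the leaves by De Morgan's laws, so literals give the single-instruction base case and internal gates are \textsc{AND} (with \textsc{OR} handled dually via $g_1\vee g_2=\overline{\overline{g_1}\wedge\overline{g_2}}$). For an \textsc{AND} gate $g=g_1\wedge g_2$ with target $\alpha$, I would pick 5-cycles $\beta,\gamma$ with $[\beta,\gamma]=\alpha$, recursively build length-$4^{d-1}$ programs that $\beta$-compute $g_1$, $\gamma$-compute $g_2$, $\beta^{-1}$-compute $g_1$, and $\gamma^{-1}$-compute $g_2$, and concatenate them in that order; a case check on the four values of $(g_1,g_2)$ shows the overall product equals $\alpha$ exactly when both inputs are $1$. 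Since the circuit has depth $O(\log n)$, the final program length is $4^{O(\log n)}=n^{O(1)}$, which then converts to a polynomial-size width-$5$ branching program. The main obstacle I expect is pinning down the commutator fact in $S_5$; once that is in hand, everything else is structural bookkeeping.
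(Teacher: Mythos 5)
The paper offers no proof of this statement; it is quoted verbatim as Barrington's theorem and attributed to the cited reference. Your proposal correctly reproduces Barrington's standard argument --- the balanced composition of width-$5$ level maps for the easy inclusion, and the induction using $5$-cycles $\beta,\gamma$ with $[\beta,\gamma]=\alpha$ (transported across the single conjugacy class of $5$-cycles) giving length $4^d$ for depth $d$ --- so it matches the approach of the cited source and is sound.
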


\noindent The theorem is a surprising result, since
$5$-width branching programs have only at most five states in each stage
of computation. Thus, it has been known that computation with constant
working space can have unexpected computational power.
In this paper, we represent the computation in Barrington's theorem
by Turing machines with large advice, which is our main contribution and
gives a new insight to computation.

\section{Proof of Theorem~\ref{thrm:poly}} \label{sec:poly}

We firstly prove a relation between constant space Turing machines
with polynomial-size advice and
polynomial-size $5$-width branching programs.

\begin{lemma} \label{lem:simu-p}
Constant space Turing machines with polynomial-size advice simulate
polynomial-size $5$-width branching programs.
\end{lemma}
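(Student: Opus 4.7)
The plan is to construct, for a given polynomial-size width-$5$ branching program on inputs of length $n$, an advice string of polynomial length together with a constant-space Turing machine that, with this advice, computes the same function. First I would normalize the branching program so that all nodes on any fixed level query the same input variable; this is a standard transformation that splits each level into at most five sublevels and only multiplies the size by a constant, and it is convenient because it makes the choice of which variable to read at a given level independent of the current computation state.

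Next, the advice will be a concatenation of per-level blocks. Each block contains a \emph{navigation string} over $\{L, R\}$ of length at most $n$, which tells the input head how to move from its current position to the variable queried at that level, together with a constant-size \emph{transition table} recording, for each of the (at most five) nodes and each Boolean value of the queried bit, the index in $\{1,\ldots,5\}$ of the successor node. With $L = \mathrm{poly}(n)$ levels and each block of size $O(n)$, the total advice length is $O(nL)$, which is polynomial in $n$. A terminal block records, for each of the at most five final node indices, the associated sink label.

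The simulator keeps in its constant-size work space only the current-node index (three bits suffice) together with a small amount of bookkeeping for parsing blocks. For each level it consumes the navigation string from the advice head, moves the input head one step per symbol, reads the queried input bit, walks through the transition table to extract the entry indexed by its current node and that bit, overwrites the current-node index with that entry, and advances past the block. When the advice is exhausted the current-node index determines the sink whose label is output. The main obstacle I anticipate is precisely that a constant work tape cannot store any counter or head address, so the simulator cannot on its own compute where to move the input head between queries; the device that resolves this is to burn the head motions directly into the advice, which is affordable because the total motion over the whole simulation is bounded by $nL$. Everything else, parsing blocks, looking up a constant-size table, and maintaining the width-$5$ node index, fits comfortably in the finite control.
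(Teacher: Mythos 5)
Your proof is correct and follows essentially the same approach as the paper: the branching program is encoded on the advice tape and scanned sequentially, with the input-head positioning problem solved by writing the head motions into the advice in unary, which is exactly the ``careful encoding of variables'' the paper refers to. The only (cosmetic) difference is that the paper marks the $i$'th of $n$ symbols so the head can be positioned absolutely without your per-level normalization, whereas you use relative $L/R$ navigation strings after forcing each level to query a single variable.
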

\begin{proof}
We prepare an encoding of the branching program on the tape of advice.
The execution of the branching program is simulated with sequential access
to the encoding.
(The Turing machine does not store the position of the advice tape.
The tape head of advice moves to only one direction.)
The encoding for variables of branching programs needs to be careful.
The encoding for variable $x_i$ consists of one starting symbol and
$n$ symbols, the $i$'th of which is marked.
\end{proof}

\noindent We prove Theorem~\ref{thrm:poly}.

\begin{proof}[Proof of Theorem~\ref{thrm:poly}]
By Theorem~\ref{thrm:bar} and Lemma~\ref{lem:simu-p}, the theorem holds.
\end{proof}

\section{Proof of Theorem~\ref{thrm:qpoly}} \label{sec:qpoly}

The proof outline of Theorem~\ref{thrm:qpoly} is similar to the one of
Theorem~\ref{thrm:poly}.
To prove Theorem~\ref{thrm:qpoly}, we prove
the quasipolynomial-size variant of Barrington's theorem.

\subsection{The Quasipolynomial-Size Variant of Barrington's Theorem}

The following theorem is the quasipolynomial-size variant of
Barrington's theorem.
(Theorem~\ref{thrm:q-bar} may be of independent interest as a variant of
Barrington's theorem, although the theorem is not the main aim of the paper.)

\begin{theorem} \label{thrm:q-bar}
The class of decision problems solvable by a nonuniform family of
quasipolynomial-size $5$-width branching programs equals
nonuniform-{\sf polyL}.
\end{theorem}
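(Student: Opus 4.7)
My plan is to establish Theorem~\ref{thrm:q-bar} as a two-sided inclusion, mirroring the structure of Theorem~\ref{thrm:bar} but scaling every size and depth parameter from the polynomial/logarithmic regime up to the quasipolynomial/polylogarithmic one.

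For the easier inclusion (quasipolynomial-size $5$-width branching programs are contained in nonuniform-{\sf polyL}), I would describe a polylogarithmic-space Turing machine that, given the branching program as nonuniform data on its advice tape, walks along the computation path one edge at a time. The index of the current node takes only $O(\log(2^{\mathrm{polylog}(n)})) = O(\mathrm{polylog}(n))$ bits and hence fits on the work tape; at each step the machine reads the variable label of the current node, looks up the corresponding input bit, and follows the appropriate outgoing edge until reaching a sink.

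For the harder inclusion, I would generalize Barrington's construction. First, translate a polyL machine into a reachability question on its configuration graph, which has $N = 2^{\mathrm{polylog}(n)}$ vertices and whose edges each depend on at most one input bit. Next, express reachability as a fan-in-two Boolean formula of depth $O(\log^2 N) = O(\mathrm{polylog}(n))$, obtained by iterated Boolean matrix squaring: the adjacency matrix is squared $O(\log N)$ times, and each squaring contributes $O(\log N)$ additional depth. Finally, invoke the original Barrington transformation, which converts any depth-$d$ fan-in-two formula into a $5$-width permutation branching program of length $4^d$; with $d = O(\mathrm{polylog}(n))$ this gives length $4^{O(\mathrm{polylog}(n))} = 2^{O(\mathrm{polylog}(n))}$, which is quasipolynomial.

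The main obstacle I anticipate is keeping the formula depth tight throughout the reduction: the reachability step must yield an honest fan-in-two formula, not merely a circuit, of polylogarithmic depth, because the $4^{d}$ blow-up in Barrington's construction would immediately exceed the quasipolynomial budget if $d$ were any larger. Iterated matrix squaring meets exactly this bound, so once the formula-depth accounting is verified, the remainder is a direct reuse of the algebraic machinery already certified by Theorem~\ref{thrm:bar}.
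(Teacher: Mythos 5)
Your proof is correct and follows essentially the same route as the paper: both directions reduce the hard inclusion to producing a fan-in-two circuit of polylogarithmic depth for the {\sf polyL} computation (the paper's Lemma~\ref{lem:conv}, whose ``standard conversion to a parallel execution'' is exactly your iterated matrix squaring on the configuration graph) and then applying Barrington's $4^d$ transformation (Lemma~\ref{lem:bar}), while the easy inclusion is the direct polylog-space simulation of a quasipolynomial-size branching program. Your worry about needing an honest formula rather than a circuit is moot, since the $4^d$ length bound in Lemma~\ref{lem:bar} already applies to depth-$d$ fan-in-two circuits; otherwise your version simply makes explicit the steps the paper leaves implicit.
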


To prove Barrington's theorem, Barrington has proved the following lemma.
The lemma is useful also for the proof of Theorem~\ref{thrm:q-bar}.
(We have added some modifications to fit the paper.)

\begin{lemma}[Theorem~1 of \cite{B89} (with some modifications)] \label{lem:bar}
Any circuit with depth $d$ and fan-in 2 can be converted to
a $5$-width branching program with length at most $4^d$.
\end{lemma}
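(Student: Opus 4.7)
The plan is to prove the lemma by induction on the circuit depth $d$ using Barrington's classical commutator technique in the symmetric group $S_5$. The central notion is a width-$5$ branching program (viewed as a sequence of instructions, each querying one variable and applying one of two permutations of the five states) that \emph{$\sigma$-computes} a Boolean function $f$, meaning that the product of the applied permutations equals the identity whenever $f(x) = 0$ and equals a fixed $5$-cycle $\sigma \in S_5$ whenever $f(x) = 1$. What I would actually prove by induction, strengthening the lemma, is that every depth-$d$ subcircuit can be $\sigma$-computed, for every $5$-cycle $\sigma$, by a width-$5$ branching program of length at most $4^d$.

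For the base case $d = 0$, a single input literal $x_i$ is $\sigma$-computed by a one-instruction program that applies $\sigma$ when $x_i = 1$ and the identity when $x_i = 0$. For the inductive step I would handle the two gate types separately. A NOT gate is free: if $P$ $\sigma$-computes $f$, then pre-composing the first instruction with $\sigma^{-1}$ (equivalently, swapping the roles of $0$ and $1$ at the first level) yields a program of the same length that $\sigma$-computes $\neg f$. For an AND gate $f = g \wedge h$ at depth $d$, I would appeal to the key group-theoretic fact that there exist two $5$-cycles $\alpha, \beta \in S_5$ whose commutator $\gamma = \alpha \beta \alpha^{-1} \beta^{-1}$ is again a $5$-cycle, witnessed for example by $\alpha = (1\,2\,3\,4\,5)$ and $\beta = (1\,3\,5\,4\,2)$.

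By the inductive hypothesis I then obtain width-$5$ programs $P_g$ and $P_h$ of length at most $4^{d-1}$ that $\alpha$-compute $g$ and $\beta$-compute $h$, respectively; their inverses $P_g^{-1}$ and $P_h^{-1}$, obtained by reversing the instruction sequence and inverting each pair of permutations, have the same length. Concatenating the four programs in the order $P_g \cdot P_h \cdot P_g^{-1} \cdot P_h^{-1}$ produces a width-$5$ program of length $4 \cdot 4^{d-1} = 4^d$ whose output permutation is the identity whenever $g = 0$ or $h = 0$ and equals $\gamma$ when both equal $1$, so it $\gamma$-computes $f$. To reach any other target $5$-cycle $\sigma$, I would conjugate by a fixed element $\tau \in S_5$ with $\tau \gamma \tau^{-1} = \sigma$, absorbing $\tau$ and $\tau^{-1}$ into the first and last instructions without increasing the length.

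The main obstacle, and the combinatorial heart of the argument, is the existence of two $5$-cycles in $S_5$ with a $5$-cycle commutator; this is a small finite check, but it is exactly what forces the width to be $5$ rather than smaller, and everything else in the construction is driven by it. The remaining work is bookkeeping: verifying that program inversion preserves length and width, confirming that conjugating target cycles can be absorbed into the first and last instructions, and tracking that the length bound is multiplied by exactly $4$ per AND gate and is unaffected by NOT gates, yielding the claimed $4^d$ length.
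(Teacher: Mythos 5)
Your proposal is correct and is precisely Barrington's original commutator argument; the paper does not reprove this lemma but simply cites it as Theorem~1 of \cite{B89}, so your reconstruction matches the intended (standard) proof. The only detail you skip is the OR gate, which the paper's circuit model includes, but this is immediate from De Morgan's laws at no extra cost since, as you note, negation is absorbed into an endpoint instruction without increasing the length; your choice of $\alpha=(1\,2\,3\,4\,5)$ and $\beta=(1\,3\,5\,4\,2)$ does yield a $5$-cycle commutator, and the length recurrence gives exactly $4^d$.
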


\noindent We prove Theorem~\ref{thrm:q-bar}.

\begin{proof}[Proof of Theorem~\ref{thrm:q-bar}]
By Lemma~\ref{lem:conv} and Lemma~\ref{lem:bar},
any quasipolynomial-size branching program can be converted to
a quasipolynomial-size $5$-width branching program.
Since the class of decision problems solvable by a nonuniform family of
quasipolynomial-size branching programs equals nonuniform-{\sf polyL},
the theorem holds.
\end{proof}

\subsection{Proof of Theorem~\ref{thrm:qpoly}}

This subsection is almost similar to Section~\ref{sec:poly}.

\begin{lemma} \label{lem:simu-qp}
Constant space Turing machines with quasipolynomial-size advice simulate
quasipolynomial-size $5$-width branching programs.
\end{lemma}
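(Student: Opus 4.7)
The plan is to carry over, essentially verbatim, the construction from Lemma \ref{lem:simu-p}, only scaling the advice length from polynomial to quasipolynomial. Assume without loss of generality that the branching program is layered with at most $5$ nodes per layer and a quasipolynomial number $L$ of layers. I would write the entire program onto the advice tape as the concatenation of per-layer blocks in the order in which the simulation visits them. Each block describes one layer by listing, for each of its (at most) $5$ inner nodes, a small header giving the labels of the two successors in the next layer together with the variable-encoding gadget from Lemma \ref{lem:simu-p}: a distinguished start symbol followed by $n$ symbols, the $i$'th marked, to identify that the node queries $x_i$. The total length is $O((n+1)\cdot L)$, still quasipolynomial.

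Next, I would describe the simulation. The Turing machine keeps in its finite control only two pieces of information: which of the $\leq 5$ nodes in the current layer is ``active'', and a few flags for managing the scan. The advice head plays the role of a pointer into the (quasipolynomial) program; the TM never needs to hold such a pointer in its own memory. To take one step, the TM walks through the active node's block: it advances the input head and the advice head in lockstep across the $n$-symbol variable gadget, and when it detects the marked advice position it records the input bit read at the corresponding main-tape position. It then uses the two successor labels stored in the header, together with that bit, to update the active-node state, and finally advances the advice head past the remainder of the current layer to the next block. All of this uses only constant work-tape space, since everything ``large'' is addressed by sequential motion of the advice head.

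The one point that merits care, and is the only real obstacle, is the quasipolynomial addressing: a constant-space machine cannot store an index into a quasipolynomial-size object, so every transition must be driven purely by the physical position of the advice head rather than by a computed pointer. The layered encoding resolves this because the next block the machine needs to read is always the immediate successor on the advice tape, and the active node within a layer is one of only $5$ possibilities, which fits in the finite control. Combined with the variable-encoding trick of Lemma \ref{lem:simu-p}, this yields the claimed simulation and completes the proof.
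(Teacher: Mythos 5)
Your proposal is correct and follows essentially the same route as the paper: the paper's proof of Lemma~\ref{lem:simu-qp} simply invokes the construction of Lemma~\ref{lem:simu-p}, namely writing the branching program on the advice tape, simulating it by purely sequential advice access, and using the start-symbol-plus-$n$-marked-symbols gadget to locate each queried variable. Your added discussion of the layered block layout and of why sequential head motion replaces any stored pointer into the quasipolynomial-size advice is exactly the (implicit) content of the paper's argument, spelled out in more detail.
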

\begin{proof}
We can prove the lemma by the similar way of Lemma~\ref{lem:simu-p}.
\end{proof}

\noindent We prove Theorem~\ref{thrm:qpoly}.

\begin{proof}[Proof of Theorem~\ref{thrm:qpoly}]
Nonuniform-{\sf polyL} obviously includes
the class of decision problems solvable by a constant
space Turing machine with quasipolynomial-size advice.
By Theorem~\ref{thrm:q-bar} and Lemma~\ref{lem:simu-qp}, the theorem holds.
\end{proof}

\section{Towards Further Studies of Large Advice} \label{sec:conc}

The main results of the paper (Theorem~\ref{thrm:poly} and
Theorem~\ref{thrm:qpoly}) raise questions and
interests for the power of large advice in computation.
In this section, we discuss further studies for computation with large advice.

\subsection{Bounded Time Computation and Large Advice}

The power of large advice in bounded time computation depends on access to
advice.
If advice is sequential access, advice beyond polynomial size is obviously
useless for polynomial time computation since we cannot access superpolynomial
advice in polynomial time.
However, if advice is random access, then advice is useful.
For example, advice of exponential size can have acceptances of all inputs.
Even if {\sf P} $\neq$ {\sf NP}, we can resolve the problems in {\sf NP}
if {\sf NP} is included in {\sf P} with advice of reasonable size and
we can prepare the advice.
The observation above raises an idea of algorithms with large advice,
which is discussed in the next subsection.

\subsection{Algorithms with Large Advice} \label{subsec:algo}

In this subsection, we attempt to propose a concept of
``algorithms with large advice''.
In our proposal, advice is precomputed data for a problem and a fixed
instance size, and we expect efficient algorithms by large or huge advice.
(We call the data advice, since the data correspond to advice of Turing
 machines.)
In the following example, the merge sort algorithm for the sorting problem
accelerates from $O(n \log n)$ time to $O(n \log\log n)$ time by huge advice.

\begin{theorem}
A modified merge sort algorithm runs in $O(n \log\log n)$ time, if
$n$ is fixed and $\tilde{O}(k^{\frac{n}{\log n}})$ advice (i.e., precomputed data
of $\tilde{O}(k^{\frac{n}{\log n}})$ space) is given.
\end{theorem}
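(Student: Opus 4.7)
The plan is to accelerate merge sort by using the advice to collapse the bottom several levels of the recursion into a single table lookup. Set the block size to $s := n / \log n$, and partition the input into $n/s = \log n$ consecutive blocks, each containing $s$ numbers from $\{0, 1, \ldots, k\}$. The advice is a table whose entries are indexed by all possible blocks: for each of the $(k+1)^s$ sequences of length $s$, the table stores its sorted version. Since each stored sequence occupies $O(s \log k)$ bits, the total advice size is $(k+1)^s \cdot O(s \log k) = \tilde{O}(k^{n/\log n})$, as required by the statement.

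The modified algorithm then runs in two phases. In the base-case phase, each of the $\log n$ input blocks is interpreted as a digit sequence in base $k+1$, giving an index into the advice table; the corresponding sorted block is read off and written to the output buffer. The work per block is $O(s)$, so the base-case phase finishes in $O(n)$ total time. In the second phase, we perform a standard (top-down or bottom-up) merge sort on these $\log n$ already-sorted runs, merging adjacent pairs until a single sorted output of length $n$ remains. At every level of this merge, the total length of the runs being processed is $n$, so each level costs $O(n)$ time. Since only $\log n$ runs must be merged, the recursion has depth $\log \log n$, giving a merging cost of $O(n \log \log n)$, which dominates.

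The main subtlety I anticipate is the cost of computing each table index, which is an integer of $\Theta(n \log k / \log n)$ bits and may exceed a single machine word. However, the index is simply the block itself, read as a digit sequence, so a streaming pass over the $s$ block elements suffices to produce an address that can be used to fetch the sorted block from the advice; both the address construction and the copy of the $s$-element result can be carried out in $O(s)$ time under the standard word-RAM conventions used for results of this kind. Once this bookkeeping is in place, combining the $O(n)$ lookup cost with the $O(n \log \log n)$ merging cost yields the claimed running time.
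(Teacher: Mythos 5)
Your proposal matches the paper's proof: the paper likewise runs the $\log\log n$ top levels of the merge-sort recursion (equivalently, merges $\log n$ blocks of size $n/\log n$) and sorts each block of $\frac{n}{\log n}$ numbers by looking it up in a precomputed table containing all patterns of such sequences together with their sorted versions. Your write-up simply fills in the indexing and cost-accounting details that the paper leaves implicit.
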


\begin{proof}
We execute the $\log\log n$ usual recursive steps of the merge sort, and
sort $\frac{n}{\log n}$ numbers by precomputed data.
Precomputed data has all patterns of sequences of $\frac{n}{\log n}$ numbers
and the sorted sequences.
\end{proof}

The sorting problem is an example for the framework of algorithms with
large advice.
We are interested in algorithms for computationally hard problems
such as {\sf NP}-hard problems, e.g., the SAT problem.

\bigskip

\noindent {\bf Remark 1.}
In one of situations which we suppose, advice (i.e., precomputed data) is
stored in some places of the internet as public data.
The current developed internet offers good environment for algorithms
with large advice.

\medskip

\noindent {\bf Remark 2.}
In this paper, we described a preliminary and minimal framework for
algorithms with large advice.
For actual execution, we need an efficient algorithm to compute advice
(i.e., precomputed data), although the algorithm runs once and the obtained
advice can be stored.

\bibliographystyle{plain}
\bibliography{circuit}

\begin{thebibliography}{1}

\bibitem{B89}
David A.~Mix Barrington.
\newblock Bounded-width polynomial-size branching programs recognize exactly
  those languages in nc{\({^1}\)}.
\newblock {\em J. Comput. Syst. Sci.}, 38(1):150--164, 1989.

\end{thebibliography}

\end{document}